\documentclass[11pt]{article}

\usepackage[preprint]{acl}

\usepackage{times}
\usepackage{latexsym}

\usepackage[T1]{fontenc}
\usepackage[utf8]{inputenc}

\usepackage{microtype}
\usepackage{inconsolata}

\usepackage{graphicx}

\usepackage{booktabs}
\usepackage{multirow}
\usepackage{tabularx}
\usepackage{array} 
\newcolumntype{R}{>{\raggedleft\arraybackslash}X}
\newcolumntype{S}{>{\hsize=0.9\hsize\raggedleft\arraybackslash}X}
\newcolumntype{W}{>{\hsize=1.6\hsize\raggedleft\arraybackslash}X}

\newcolumntype{T}{>{\hsize=0.8\hsize\raggedleft\arraybackslash}X}
\newcolumntype{M}{>{\hsize=1.4\hsize\raggedleft\arraybackslash}X}

\usepackage{amsmath, amssymb, amsfonts, amsthm}
\usepackage{xcolor}
\usepackage{mdframed}             %

\usepackage{algorithm}
\usepackage{algpseudocode}        %

\algrenewcommand\algorithmicindent{1.2em}

\usepackage[font=small, labelfont=bf, labelsep=period]{caption}

\title{Characterizing Opinion Evolution of Networked LLMs}

\author{  \textbf{Caleb Probine},
 \textbf{Yigit Ege Bayiz},
 \textbf{Filippos Fotiadis},
 \\
 \textbf{Samuel Li},
 \textbf{Yunhao Yang},
 \textbf{Ufuk Topcu}
 \\
 The University of Texas at Austin
 }

\usepackage{comment}
\usepackage{nicefrac}

\begin{document}
\maketitle
\begin{abstract}

Large language models (LLMs) increasingly interact with one another in multi-agent systems, from simulations of human discourse to influence operations and fully LLM-driven social platforms. 
These interactions give rise to new regimes of opinion propagation that are not yet well understood.
We investigate whether classical opinion dynamics models, which have long been used to explain how interactions shape collective beliefs in human societies, can capture the behavior of LLM networks. 
We find that, while naive averaging-style models fail to track LLMs’ opinion dynamics, simple modifications yield substantial gains in modeling fidelity. In particular, bias, an innate opinion toward which agents regress, emerges as a significant driver of LLM opinion dynamics, with its inclusion reducing cumulative estimated mean opinion error by up to $88\%$. We additionally find that these conclusions generalize across model families, discussion topics, and networks.

\end{abstract}

\section{Introduction}
\label{sec:intro}

Networked systems of large language models (LLMs) are increasingly deployed across social media and information environments.
They have been studied as tools for simulating opinion formation in purely human settings \cite{chuang2024simulating, tornberg2023simulating, wang2025decoding}, are emerging as instruments of influence operations in mixed human–LLM networks \cite{goldstein2023generative}, and most recently, have led to fully LLM-driven social platforms \cite{lin2026exploring}. These developments raise fundamental questions about how opinions form, evolve, and propagate in networks of interacting LLMs.

Classical opinion dynamics models offer a promising lens for reasoning about the emergent behavior of groups of LLMs. In human social systems, such models have proven powerful for studying how effects such as stubbornness, homophily, and bias drive belief evolution. Extending them to interacting LLMs may therefore provide a principled framework for understanding emergent behavior in LLM networks. When LLMs are used to simulate humans, for instance, opinion dynamics models can certify when LLM populations behave as faithful surrogates for human ones. For influence operations, they can identify the leverage points through which stubborn LLM agents disproportionately shape collective beliefs. For LLM-driven social platforms, they can guide design choices that mitigate echo chambers driven by homophily. Realizing these uses, however, requires knowing which mechanisms, such as stubbornness, homophily, and bias, actually drive LLM networks.

We investigate which mechanisms classical opinion dynamics models must capture to track LLMs' belief evolution.
Recent work \cite{abedini2026don,yazici2026opinion,ju2024sense} provides initial insight into the alignment of networked LLMs' opinions with models such as DeGroot \cite{degroot1974reaching}, Friedkin-Johnsen \cite{friedkin1990social}, or Hegselmann-Krause \cite{hegselmann2002opinion} dynamics.
Each of these models hard-codes a specific combination of stubbornness, homophily, and bias.
In contrast, we quantify how the relative effects of stubbornness, homophily, or bias govern the evolution of LLMs' beliefs.

We study the following problem.
\begin{itemize}
    \item \textbf{RQ1}: Which components of classical opinion dynamics models, such as stubbornness, homophily, and bias, are most critical for explaining LLMs' belief evolution?
\end{itemize}

We also investigate the extent to which conclusions about the importance of these components generalize across different settings.
\begin{itemize}
    \item \textbf{RQ2}: To what extent do the derived opinion dynamics parameters differ between networks of different LLMs and discussion topics? 
    \item \textbf{RQ3}: Are the recovered opinion dynamics a property of the LLM itself, or of the particular network on which it communicates?
\end{itemize}

By answering these questions, we lay the groundwork for the design, control, and validation of opinion dynamics across networks that include LLMs.

To address these questions, we simulate discussions among groups of LLMs, extract their opinions, and fit opinion dynamics models to the resulting trajectories. We first simulate discussions in which LLM agents exchange messages over a communication graph on contested topics. A stance identification model then converts each message into a numerical opinion, yielding opinion trajectories over time. We fit classical opinion dynamics models to these trajectories and quantify how well each model, and each of its components, reproduces the observed behavior, thus addressing RQ1. We then repeat this procedure across the Qwen, Llama, and Gemma families and across topics, testing whether the same components dominate, addressing RQ2. Finally, while RQ1 and RQ2 assume each agent is influenced uniformly by its neighbors, we fit per-edge influence weights and test whether this added flexibility substantially improves prediction accuracy, addressing RQ3.

\subsection*{Related work}

\paragraph{Classical opinion dynamics}
Classical opinion dynamics provides a rich set of models for information flow in social networks.
In the simplest opinion dynamics model, i.e., the DeGroot model \cite{degroot1974reaching}, agents update their opinions by averaging amongst their neighbors.
The Friedkin-Johnsen (FJ) model then reflects stubbornness by modeling an attraction to initial opinions \cite{friedkin1990social}.
Meanwhile, in bounded-confidence models, agents only exchange opinions with their neighbors if their opinions are sufficiently close \cite{deffuant2000mixing,hegselmann2002opinion}.
We aim to explore the extent to which various components of these opinion dynamics models reflect the collective behavior of networks of large language models (LLMs). 

\paragraph{Generative agent-based modeling}
Numerous authors propose the use of LLMs for simulation of human opinion dynamics \cite{chuang2024simulating,tornberg2023simulating,composta2025simulating,wang2025decoding,liu2025beliefs,donkers2025human,yao2025fusing,mou2024unveiling}.
However, rather than proposing LLMs replace classical opinion dynamics models, we investigate whether we can model the evolution of collective beliefs among networked LLMs with classical opinion dynamics.
Some works compare the performance of LLM-based opinion dynamics models to classical models \cite{mou2024unveiling,yao2025fusing,gu2025large,liu2025beliefs,composta2025simulating,wang2025decoding,gao2023s3}.
These comparisons, in turn, yield insights into the differences between opinion dynamics models and LLM-based simulations, for example, demonstrating that LLMs' opinion dynamics exhibit sharper changes than FJ models \cite{composta2025simulating} or that FJ and bounded confidence models are less capable of generating polarization compared to LLM-based simulations \cite{wang2025decoding}.
However, these works do not characterize which components of opinion dynamics models we require to explain LLMs' belief evolution.

\paragraph{Analytical models for opinion dynamics of networked LLMs}

A handful of recent works provide initial insights into how well classical opinion dynamics models fit the dynamics of networked LLMs \cite{yazici2026opinion,he2026opinion,abedini2026don,ju2024sense,khlytchievaanalyzing}.
In particular, \cite{yazici2026opinion} demonstrates that interacting LLM agents fail to converge to the consensus predicted by the DeGroot model.
Meanwhile, some prior works find that DeGroot or FJ dynamics can model the opinion dynamics of small groups of LLMs \cite{he2026opinion,abedini2026don}. 
Finally, \cite{ju2024sense} shows that the single-step dynamics of networked LLMs are well-modeled by Hegselmann-Krause dynamics.
While these studies point to correspondences between LLMs' dynamics and opinion dynamics, they fail to precisely identify which of the many possible components of opinion dynamics models accurately capture the opinions of networked LLMs. 
While more comprehensive explorations exist for two-agent interactions
\cite{brockers2025disentangling}, these analyses do not extend to networked interactions of many agents.

A few recent works propose opinion dynamics models for systems including LLMs, for example, modeling LLM-generated content's interaction with human opinions \cite{marchi2025boiling}, or opinion dynamics among humans who depend on LLMs \cite{li2026modeling}. 
Rather than proposing new opinion dynamics models, we explore how well existing models reflect networked LLMs' dynamics.

\section{Problem Statement}

We consider $n$ LLM agents that communicate over a directed graph $G = (V, E)$. Each vertex $i \in V$ corresponds to an LLM agent, and an edge $(i, j) \in E$ indicates that agent $j$'s messages are visible to agent $i$. 
We write $A \in \mathbb{R}_+^{n \times n}$ for the adjacency matrix of $G$, where the component $A_{ij}$ encodes the expected rate at which agent $i$ sees agent $j$'s posts. We use $\bar{A}$ to denote the row-normalized version of $A$. 
Each agent $i$ holds an opinion $z_i(t) \in [-1, 1]$, which evolves over discrete time $t \in \{0, 1, \ldots, H\}$ as the agents exchange messages, with $z(t) = [z_1(t)~\ldots~z_n(t)]^\top$.

We aim to characterize how the opinion vector $z(t)$ evolves in terms of the intrinsic behavioral traits of the constituent LLM agents, for example, their stubbornness, innate biases, or tendency toward homophilic interaction. To make this characterization tractable, we posit that $z(t)$ obeys a parametric model of the form
\begin{equation}\label{eq:parameterization}
    z(t+1) = f_\theta(z(t), z(0), A),
\end{equation}
where $\theta$ is a low-dimensional parameter vector encoding the agents' behavioral traits. The class~\eqref{eq:parameterization} subsumes classical opinion dynamics models such as DeGroot, Friedkin--Johnsen, and homophily-based dynamics, as well as extensions incorporating innate bias, which we instantiate in Section~\ref{sec:methods}.

Within this class, we further distinguish \emph{graph-specific} models, in which the parameters $\theta$ are fit independently for each graph, from \emph{graph-agnostic} models, in which $\theta$ encodes a small set of scalar coefficients that apply across topologies. The graph-agnostic form encodes the assumption that agent $j$'s contribution to agent $i$'s opinion update is a function of the rate at which agent $j$ messages agent $i$.
While restrictive, this assumption yields a behavioral profile $\theta$ that is portable across networks.

The recovered parameters $\theta$ act as a behavioral profile of the underlying LLM. By comparing these profiles across LLMs, topics, and graph structures, we aim to isolate which mechanisms are intrinsic to a model, which are topic-dependent, and which depend on the network in which the model is embedded, thus answering RQ1, RQ2, and RQ3.

\section{Methods}\label{sec:methods}

\subsection{Modeling Opinion Dynamics}
\label{sec:methods_models}

We now characterize the parametric models \eqref{eq:parameterization} that we consider in this study, each corresponding to a different hypothesis about the intrinsic characteristics of the LLMs that drive the evolution of their opinions. In all of the subsequent sections we make the notational assumption that multipliers of all additive terms---denoted as $\lambda_\star$---sum to one. This constraint is common across opinion dynamics models, and is needed to ensure stability and non-trivial opinion behavior.

\paragraph{DeGroot Model}

The simplest realization of the parametric model \eqref{eq:parameterization} is the DeGroot model, in which each agent's opinion is a convex combination of its neighbors' opinions, i.e.,
\begin{equation}\label{eq:DG}
z(t+1)=Wz(t).
\end{equation}
Here, $W$ is a row-stochastic matrix that plays the role of the parameter $\theta$, and it is such that $W_{ij} = 0$ if $(i,j) \notin E$ and $j \neq i$. Under some fairly mild assumptions \cite{golub2010degroot}, the DeGroot model is known to converge to a consensus in the steady state, i.e., to a situation where $z_1(t)=z_2(t)=\ldots=z_n(t)$ as $t\rightarrow\infty$.

\paragraph{Friedkin-Johnsen Model} In many situations, it is difficult for agents to reach a consensus as agents are stubborn and unwilling to abandon their initial opinions. The Friedkin-Johnsen (FJ) model captures this phenomenon by extending the DeGroot model \eqref{eq:DG} with a stubbornness term, leading to
\begin{equation}\label{eq:FJ}
    z(t+1) = \lambda_{\mathrm{avg}}W z(t) + \lambda_0 z(0).
\end{equation}
Here, $\lambda_{\mathrm{avg}}\in[0,~1]$ is the social averaging weight, $\lambda_{0}\in[0,~1]$ is the \textit{initial opinion weight} that specifies the resistance of each agent to deviate from their initial opinion, and $W$ is as in the DeGroot model. As previously stated, the multiplicative weights $\lambda_\star$ sum to $1$ by construction, i.e., $\lambda_{\mathrm{avg}} +\lambda_0 = 1$. The tuneable parameter set $\theta$ comprises $\lambda_0$ and $W$. Note that due to the constraint on $\lambda_\star$, the averaging weight $\lambda_{\mathrm{avg}}$ is not a free parameter. Unless $\lambda_0=0$, the FJ model usually does not lead to a consensus.

\paragraph{Uniform Bias} LLM agents, in particular, often have a tendency to favor specific points of view on certain topics, owing to biases introduced during training \cite{pmlr-v202-santurkar23a}. These biases are independent of the initial opinions of the LLM agents and are thus not captured by the FJ model. To capture them, one can add a uniform bias term $b\in[-1,~1]$ to \eqref{eq:FJ}, leading to the opinion dynamics
\begin{equation}
    z(t+1) = \lambda_{\mathrm{avg}}W z(t) + \lambda_{0} z(0) + \lambda_b b,
\end{equation}
where $\lambda_b\in[0,~1]$ is the \textit{bias weight} indicating how biased agents are toward the opinion $b$, and as before, we have the constraint $\lambda_{\mathrm{avg}}+\lambda_0+\lambda_b=1$. In this case, $\theta$ comprises the parameters $\lambda_{0}, ~\lambda_b,~W$.

\paragraph{Homophily} While the addition of biases in the DeGroot model \eqref{eq:DG} captures a broader range of opinion dynamics, it misses a fundamental aspect of how social agents interact in that they weigh the opinions of like-minded agents more heavily. Homophily captures this kind of weighting by substituting the constant $W$ matrix with a distance-based one~\cite{newman2018networks}, namely
\begin{multline}\label{eq:homophily_term}
    H_{ij}(W, z, \gamma)=  \\ \frac{W_{ij} \exp(-\gamma |z_i - z_j| ) }{\sum_{j} W_{ij} \exp(-\gamma |z_i - z_j| )},
\end{multline}
where $\gamma$ controls the homophily strength. Higher $\gamma$ values force agents to be more selectively influenced by other agents of similar opinions. This new matrix leads to the opinion dynamics
\begin{multline}\label{eq:homophily}
    z(t+1) = \lambda_{\mathrm{avg}} H(W,z(t),\gamma)z(t)\\+ \lambda_0 z(0) + \lambda_b b,
\end{multline}
where $\lambda_\star$ sums to one as always, and the learnable parameters are $\gamma, \lambda_\star,b$, and $W$.

\paragraph{Adjacency-Based Weighting} Finally, we discuss the role of the weighting matrix $W$ across all the models presented above. When the set of parameters $\theta$ in \eqref{eq:parameterization} includes the weighting matrix $W$, the fitted opinion dynamics model entangles the LLM agents' behavioral characteristics with the specific graph structure considered. To study whether the behavioral characteristics of LLMs transfer across different graph topologies, one can consider interaction matrices of the form
\begin{equation*}
\mu_{\mathrm{self}} I + \mu_{\mathrm{neigh}} \bar{A},
\end{equation*}
where $\bar{A}$ is the row-normalized adjacency matrix of the graph $G$, and $\mu_{\mathrm{self}}+\mu_{\mathrm{neigh}}=1$, $\mu_{\mathrm{self}},\mu_{\mathrm{neigh}}\in[0,1]$ are tuneable parameters. 
In the case of homophily, the interaction term is the matrix
\begin{equation*}
    \mu_{\mathrm{self}} I + \mu_{\mathrm{neigh}} H(\bar{A}, z, \gamma). 
\end{equation*}
When inserted into the previously discussed opinion dynamics models, the parameters $\mu_{\mathrm{self}}$ and $\mu_{\mathrm{neigh}}$ get multiplied with $\lambda_{\mathrm{avg}}$ to denote weights corresponding to self update, and neighbor averaging, which we correspondingly denote as 
\begin{equation}
    \lambda_{\mathrm{self}} = \lambda_{\mathrm{avg}}\mu_{\mathrm{self}}\; \textrm{, and } \; \lambda_{\mathrm{neigh}} = \lambda_{\mathrm{avg}}\mu_{\mathrm{neigh}}.
\end{equation}
For the  DeGroot model, with no $\lambda_\star$ parameters, we set $\lambda_{\mathrm{self}} = \mu_{\mathrm{self}}$, and $\lambda_{\mathrm{neigh}} = \mu_{\mathrm{neigh}}$.

\subsection{Simulating LLMs' Opinion dynamics}

\subsubsection{Discussion protocols}

To generate opinion dynamics data for groups of LLMs, we simulate conversations amongst LLM-agents on synthetically generated graphs.

Agents post in sequence with random inter-post times and ordering.
For each post, we select an agent uniformly randomly to generate the next post, with that post being added to a pool of posts for that agent.
We then assign random times to each post according to a Poisson process with rate $\rho$.

To generate the posts, we prompt the agents to construct arguments in response to other agents' posts.
Fix an agent $i \in V$ and an integer parameter $k$.
We first select up to $k$ posts uniformly at random from the prior posts of the agents that influence $i$.
We then prompt the agent to justify their position based on 1) their initial opinion, and 2) the $k$ sampled posts.
Finally, we prompt the agent to generate a post based on their internal justification.

We encode the set of agents influencing agent $i$ in the interaction graph $G$, where edge $(i,j)$ indicates that agent $j$ influences agent $i$.
As we sample the $k$ posts that agent $i$ sees uniformly from agent $i$'s neighbors, and $A_{ij}$ encodes the rate at which $i$ sees $j$'s posts, the entry $\bar{A}_{ij}$ of the row-normalized adjacency matrix is equal to $\nicefrac{1}{N(i)}$ for each neighbor $j$ of $i$, where $N(i)$ is the number of agents influencing agent $i$.
We next detail the generation of the interaction graph $G$.

\subsubsection{Network generation}
\label{sec:network_generation}

We evaluate the opinion dynamics of LLMs on a mixed dataset composed of three widely studied random graph models that together cover the topological phenomena commonly attributed to online social networks, such as scale-free graph structure with geometric degree distributions, small-world graphs with diameters that scale logarithmically with the number of agents, and community structures that arise due to homophilic interactions.

\paragraph{Erd\H{o}s--R\'enyi models \cite{bollobas2001randomgraphs}}
The Erd\H{o}s--R\'enyi model is a widely used baseline for random graph analysis. Each pair of nodes is connected independently with a uniform probability $p$,  producing a maximally unstructured network with homogeneous degree distribution and no latent community organization.
\paragraph{Chung-Lu models \cite{Chung2002chunglu}}
The Chung-Lu model builds upon the Erd\H{o}s--R\'enyi model by assigning each node an expected degree, with edge probabilities set proportional to the product of endpoint degree weights. When degree weights are drawn from a geometric distribution, the resulting degree sequence follows a power law, matching the empirical heavy-tailed distributions that are common in social networks. Thus, Chung-Lu models reproduce realistic degree heterogeneity, i.e., hubs vs. low-degree periphery nodes.
\paragraph{Stochastic block models  \cite{holland1983sbm}}
Stochastic block models (SBMs) partition nodes into communities and define edge probabilities as a function of community membership, with within-block connectivity being typically denser than cross-block connectivity. Thus, SBMs are the standard generative model for simulating networks with community structure, capturing the clustering patterns characteristic of real social networks such as echo chambers or professional affiliation groups.

For each run used to answer RQ1/RQ2, we first sample a graph from a randomly selected type. For RQ3, we generate a single Chung-Lu graph on which we simulate all results.

\subsubsection{Opinion measurement}
We convert the simulation output, a sequence of natural-language posts indexed in continuous time, into discrete-time-indexed numerical opinions.

We compute opinions by assessing similarity in representation space between the agents' posts and reference prompts.
Fix a post $t$, and two texts $a_{\text{pro}}$ and $a_{\text{anti}}$ that encode for and against opinions on a given topic.
We first pass $t$, $a_{\text{pro}}$, and $a_{\text{anti}}$ through a text embedding model \cite{openai-text-embedding-3-small}  to get embedding vectors $\mathbf{t}$,
$\mathbf{a}_{\text{pro}}$, and $\mathbf{a}_{\text{anti}}$, respectively. 
We then compute an opinion of a post as
\begin{equation}
    z = \text{sim}(\mathbf{t}, \mathbf{a}_{\mathrm{pro}}) - \text{sim}(\mathbf{t}, \mathbf{a}_{\mathrm{anti}}),
\end{equation}
where $\text{sim}(\mathbf{x}, \mathbf{y}) = \nicefrac{\mathbf{x} \cdot \mathbf{y}}{\|\mathbf{x}\| \|\mathbf{y}\|}$.

We define an agent's opinion at each time step as the mean post opinion computed over a sliding window of the agent's five most recent posts. If an agent has fewer than five posts, we compute the mean over all of them. Averaging over the window smooths short-term fluctuations and measurement errors, capturing the agent's persistent behavior. When an agent has yet to publish a message, we score the agent's initialization prompt, defining its assigned opinion on the topic. 

We sample the agent opinions at discrete timestamps to extract the ground-truth discrete-time evolution of agent opinions. To account for startup latency during network initialization, we treat the timestamp of the first post as the start time of the simulation. We assign subsequent posts a simulated timestamp drawn from the Poisson-process wait times and extract the integer time intervals from the assigned timestamps as sampling points.

\subsection{Opinion initialization}

We initialize each agent with a discrete opinion from the set $\{-1, -0.5, 0, 0.5, 1\}$  ranging from complete disagreement to complete agreement on the topic, where $0$ denotes a neutral stance. We map each value to a natural-language descriptor---e.g.,``opposed", ``skeptical", ``agree"---and embed it in the agent's initialization prompt, which states its stance on the topic.

To initialize the network, we randomly assign agents to the five opinion categories according to predefined probability distributions. Each distribution corresponds to a set of initial conditions, including uniform, polarized, consensus, and skewed configurations. 
For RQ1 and RQ2 we allocate a train set for model fitting, with the remaining runs being allocated as a test set that we use in answering RQ1.
The training set consists of the configurations uniform, bimodal polarized, moderate centered, skew pro, and skew anti, while the test set consists of consensus pro, consensus anti, centered with extremes, and hollow centered. 
In the experiments for RQ3, we generate data using all of the initial condition configurations. 

\subsection{Model Fitting}

\begin{figure*}[h]
    \centering
    \includegraphics[width=\linewidth]{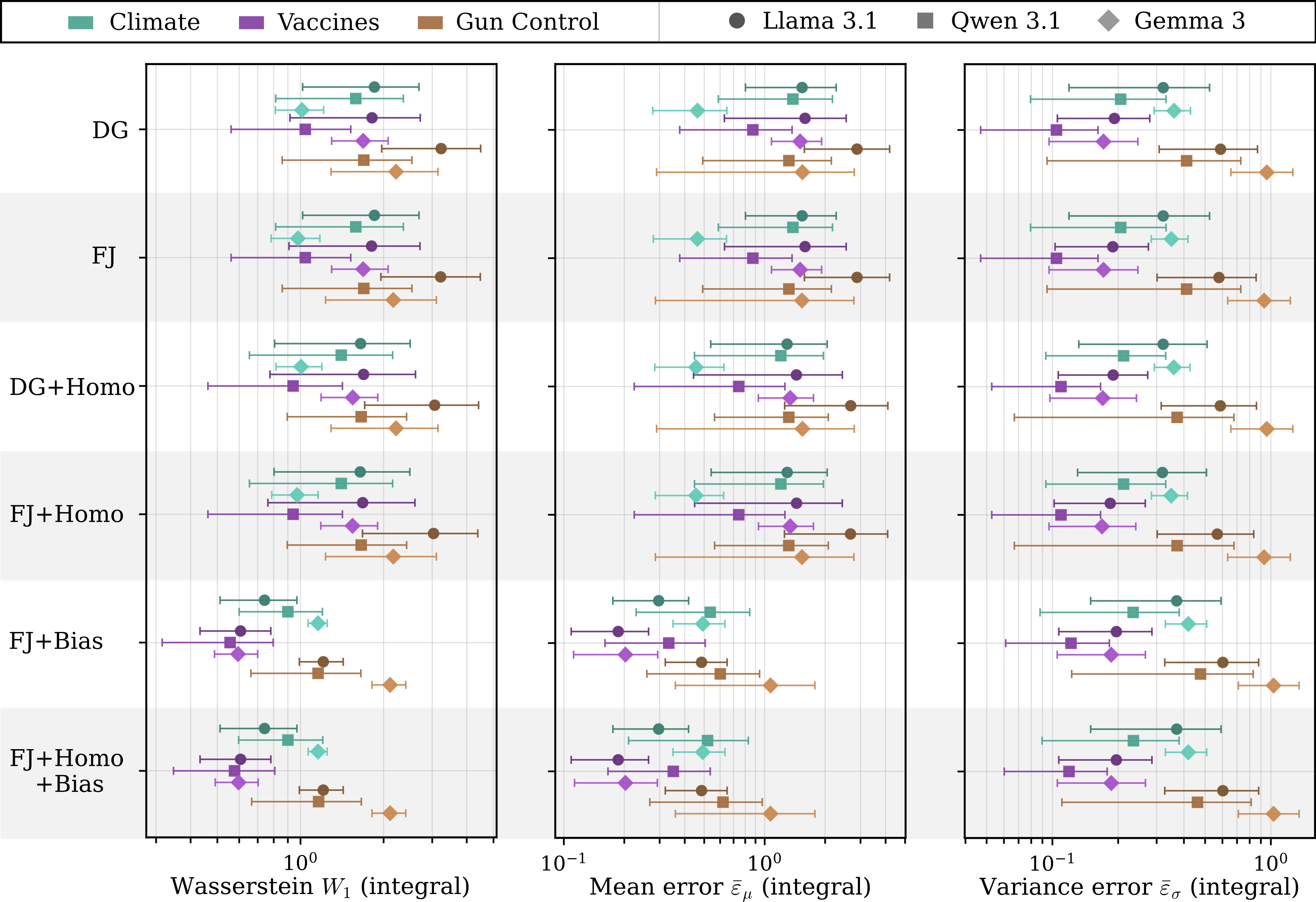}
    \caption{Out-of-sample prediction error across opinion dynamics models, LLMs, and discussion topics, reported on three metrics. Each point shows the mean error across held-out runs.  Whiskers show one standard deviation. We abbreviate the models as follows:
    DG (DeGroot),
    FJ (Friedkin--Johnsen),
    Homo (Homophily weighting),
    Bias (Opinion bias term).
    Models that include a uniform bias term consistently achieve the lower errors across mean and Wasserstein-1 distance metrics.}
    \label{fig:rq1}
\end{figure*}

We fit the opinion dynamics model by minimizing the following least squares objective
\begin{equation}
    \label{eq:ls}
    \sum_{t,r}|| z^{(r)}(t+1) - f_\theta(z^{(r)}(t), z^{(r)}(0), \bar{A})||_2^2,
\end{equation}
where for each model, the opinion dynamics model $f_\theta$ admits the dynamics in the corresponding model definitions in Section~\ref{sec:methods_models}.

\subsubsection{Fully parametrized models}
In fully parametrized models, the influence matrix $W$ is a $n \times n$ row-stochastic parameter fit independently for each graph instance with entries constrained to respect the graph $G$. For models without a homophily term, i.e., DeGroot, Friedkin-Johnsen, and Friedkin-Johnsen with bias, the objective is convex in the model parameters. For fixed $\lambda_0$ and $\lambda_b$, optimizing over $W$ is a constrained quadratic program (QP), which is solvable using a standard convex solver. For the Friedkin-Johnsen-with-bias model, we apply the reparameterization $\tilde{W} = \lambda_{\mathrm{avg}}W$ and $\tilde{b} =\lambda_{b} b$, under which the joint optimization over $\tilde{W}$, $\lambda_0$, and $\tilde{b}$ is a convex QP solvable to global optimality given sufficient data (see Appendix \ref{apx:cvx_reformulation}).

For the models containing homophily, the influence matrix $H(W, z, \gamma)$ depends nonlinearly on the opinion vector $z(t)$ through the exponential distance kernel, making the objective non-convex in $(W, \gamma)$ jointly. We therefore optimize these models using a nonlinear SLSQP solver in Scipy with multiple random restarts \cite{2020SciPy-NMeth}.

\subsubsection{Adjacency-based models}
In adjacency-based models, we constrain the interaction matrix to be of the form $\mu_{\mathrm{self}} I + \mu_{\mathrm{neigh}} \bar{A}$ or $\mu_{\mathrm{self}} I + \mu_{\mathrm{neigh}} H(\bar{A}, z, \gamma)$,
reducing the parameter space to a small set of scalar coefficients shared across graph
instances. For models without homophily, fitting these coefficients for a fixed graph
reduces to a convex QP in the $\lambda$'s, which we solve.

For models that include homophily, the homophily strength parameter $\gamma$ enters the objective nonlinearly and cannot be recovered by convex methods. We instead perform a one-dimensional sweep over $\gamma$ where
for each candidate value, we solve the resulting convex QP for the $\lambda_\star$'s and select
the $\gamma$ minimizing the objective~\eqref{eq:ls}.

Given their parameter efficiency and generalizability, we investigate RQ1 and RQ2 with adjacency-based models, after which we compare them against fully-parameterized variants in RQ3.

\section{Experimental Results}
\label{sec:experiments}

\begin{table*}[h]
  \centering
  \small
  \caption{%
    Fitted parameters of the \textit{FJ}+\textit{Homo}+\textit{Bias} model
    across LLMs and topics.
    $\lambda_{\mathrm{neigh}}$ and $\lambda_{\mathrm{self}}$ are the
    neighbor influence and self-weight terms;
    $\lambda_{0}$ is the initial opinion weight;
    $b$ and $\lambda_b$ are the opinion bias and its weight;
    $\gamma$ controls homophily strength.
  }
  \label{tab:fitted_params}
  \begin{tabularx}{\textwidth}{llSSSSSSW}
    \toprule
    LLM & Topic & $\lambda_{\mathrm{neigh}}$ & $\lambda_{\mathrm{self}}$ & $\lambda_0$ & $\lambda_b$ & $b$ & $\gamma$ & $\lambda_{\mathrm{neigh}}/(\lambda_{b}b)$ \\
    \midrule
    \multirow{3}{*}{\textsc{Gemma3}} & Climate & $0.0290$ & $0.6214$ & $0.0055$ & $0.3441$ & $0.3483$ & $0$ & $0.2420$ \\
     & Gun Control & $0.1404$ & $0.6901$ & $0.0027$ & $0.1668$ & $0.2277$ & $0$ & $3.6949$ \\
     & Vaccines & $0.0116$ & $0.6076$ & $0.0160$ & $0.3648$ & $0.5841$ & $1.2159$ & $0.0544$ \\
    \midrule
    \multirow{3}{*}{\textsc{Llama3.1}} & Climate & $0.0002$ & $0.6560$ & $0.0195$ & $0.3243$ & $0.6699$ & $0$ & $0.0011$ \\
     & Gun Control & $0$ & $0.6689$ & $0.0381$ & $0.2930$ & $0.8531$ & $0$ & $0$ \\
     & Vaccines & $0.0192$ & $0.6900$ & $0.0363$ & $0.2546$ & $0.5972$ & $0.0350$ & $0.1262$ \\
    \midrule
    \multirow{3}{*}{\textsc{Qwen3}} & Climate & $0.0821$ & $0.7019$ & $0.0036$ & $0.2125$ & $0.5370$ & $1.0886$ & $0.7196$ \\
     & Gun Control & $0.1175$ & $0.6592$ & $0$ & $0.2233$ & $0.4337$ & $0.9710$ & $1.2130$ \\
     & Vaccines & $0.2001$ & $0.5344$ & $0$ & $0.2655$ & $0.4681$ & $23.160$ & $1.6100$ \\
    \bottomrule
  \end{tabularx}
\end{table*}

We evaluate our framework on three open-weight LLM families, \textsc{Llama3.1},
\textsc{Qwen3}, and \textsc{Gemma3}, across three contested discussion topics:
climate change, vaccines, and gun control. We choose these topics because they
provoke sustained disagreement and elicit clear stances. For each LLM and topic, we simulate
populations of $n=30$ agents exchanging posts over the random graphs we describe in
Section~\ref{sec:network_generation}, fit the opinion dynamics models of
Section~\ref{sec:methods} to the resulting trajectories, and measure how well
each model predicts opinions on held-out runs. We organize our findings around
the three research questions of Section~\ref{sec:intro}.

\subsection{RQ1: Which components drive LLM opinion dynamics?}
Figure~\ref{fig:rq1} reports out-of-sample error for all six models across every LLM and topic. The bias-free models, namely DeGroot, Friedkin-Johnsen, and their homophily variants, cluster together at high error, while the two models that include a uniform bias term separate from this cluster and attain the lowest mean and Wasserstein-1 errors in nearly every setting, while leaving variance error relatively unchanged. Adding bias reduces integral mean error by up to $88\%$ and integral Wasserstein-1 distance by up to $67\%$, both attained on \textsc{Llama3.1}. 

Further comparisons reinforce the innate model bias as the primary model component responsible for the majority of error reduction. First, Friedkin-Johnsen tracks DeGroot almost exactly in every error metric, which shows that anchoring agents to their own initial opinions does not improve fit. Second, adding homophily weighting to any base model changes its error by at most a few percent, and once bias is present homophily produces no measurable change. The mechanism that matters is therefore neither stubbornness toward one's initial opinion nor preferential weighting of like-minded neighbors, but regression toward a single opinion shared across all agents.

The size of the bias effect depends on the LLM. \textsc{Llama3.1} shows the largest and most uniform gain, with integral mean error falling between $81\%$ and $88\%$ across all three topics. \textsc{Qwen3} gains the least, with reductions between $54\%$ and $62\%$. \textsc{Gemma3} sits between the two and is the only model for which bias fails to help in one setting, where climate error rises slightly because the bias-free fit is already accurate.

The bias effect also depends on the topic. Across all three LLMs, the vaccines topic benefits most from the bias term, with integral mean error reductions of $62\%$ to $88\%$ and Wasserstein-1 reductions of $47\%$ to $67\%$, exceeding the corresponding reductions for climate and gun control in every LLM. A plausible explanation is that alignment training drives LLMs toward a consistent stance on health misinformation regardless of assigned initial opinion, an effect that the bias term captures directly.

\subsection{RQ2: Parameter Comparison}

Having established that \textit{FJ} dynamics with added \textit{Opinion Bias} captures LLM opinion dynamics, a natural question is what quantitative weight each opinion dynamics component has on the opinion evolution. This section aims to quantify these weights by comparing fitted model parameters directly, and comparing how the parametrizations differ across LLM/Topic combinations.

Table~\ref{tab:fitted_params} shows the parameters of the full \textit{FJ}+\textit{Homo}+\textit{Bias} model fit separately to each LLM and topic. The self-weight $\lambda_{\mathrm{self}}$ is the largest parameter in every case, between $0.53$ and $0.70$, so each agent's next opinion is governed primarily by its current one. The scale of the self weights however, are largely irrelevant to the long term behavior of the LLMs, as $\lambda_{\mathrm{self}}$ only determines the convergence speed in comparison to the sampling intervals. Thus, the scale of the self weights are purely an artifact of the chosen time scale. 

The initial-opinion weight $\lambda_0$ never exceeds $0.04$, confirming across every LLM and topic that anchoring to starting opinion plays very little role in driving opinion evolution. The bias weight $\lambda_b$ is the second-largest term in most settings, between $0.17$ and $0.36$, so regression toward an innate opinion is a consistent feature of LLM dynamics rather than an artifact of a single model or topic.

The ratio $\lambda_{\mathrm{neigh}}/(\lambda_b b)$ between neighbor opinion effects and self-bias effects provide an overall summary of how much each LLM family preferentially allocate their opinion updates to social interaction effects. This ratio shows substantial variance between LLM/topic combinations. \textsc{Llama3.1} holds it near zero on every topic, from $0.001$ on climate to $0.13$ on vaccines, so its bias alone sets its opinions and its neighbors barely register. \textsc{Qwen3} runs from $0.72$ to $1.61$, so its neighbors match or outweigh its bias on all three topics. \textsc{Gemma3} jumps with the topic, from $0.05$ on vaccines to $3.69$ on gun control, the largest value in the table. How social an LLM acts thus depends on the topic as much as on the model, and only \textsc{Llama3.1} stays at one extreme across all three, ignoring social interaction effects almost completely in its long-term opinion evolution. This behavior may be attributable to \textsc{Llama3.1} being an older model, as recent improvements in post-training may teach models to weigh the conversation in front of them over a stance fixed during training, though three models cannot establish the cause.

Homophily barely affects the fit. Figure~\ref{fig:homophily_sweep} sweeps $\gamma$ over its full range, and the error varies nonlinearly by at most a percent. Accordingly, the presence of homophily in the majority of our experiments is low. \textsc{Qwen3} alone shows substantial homophily presence, a pattern that is also common in human social networks.

\begin{figure}
    \centering
    \includegraphics[width=1\linewidth]{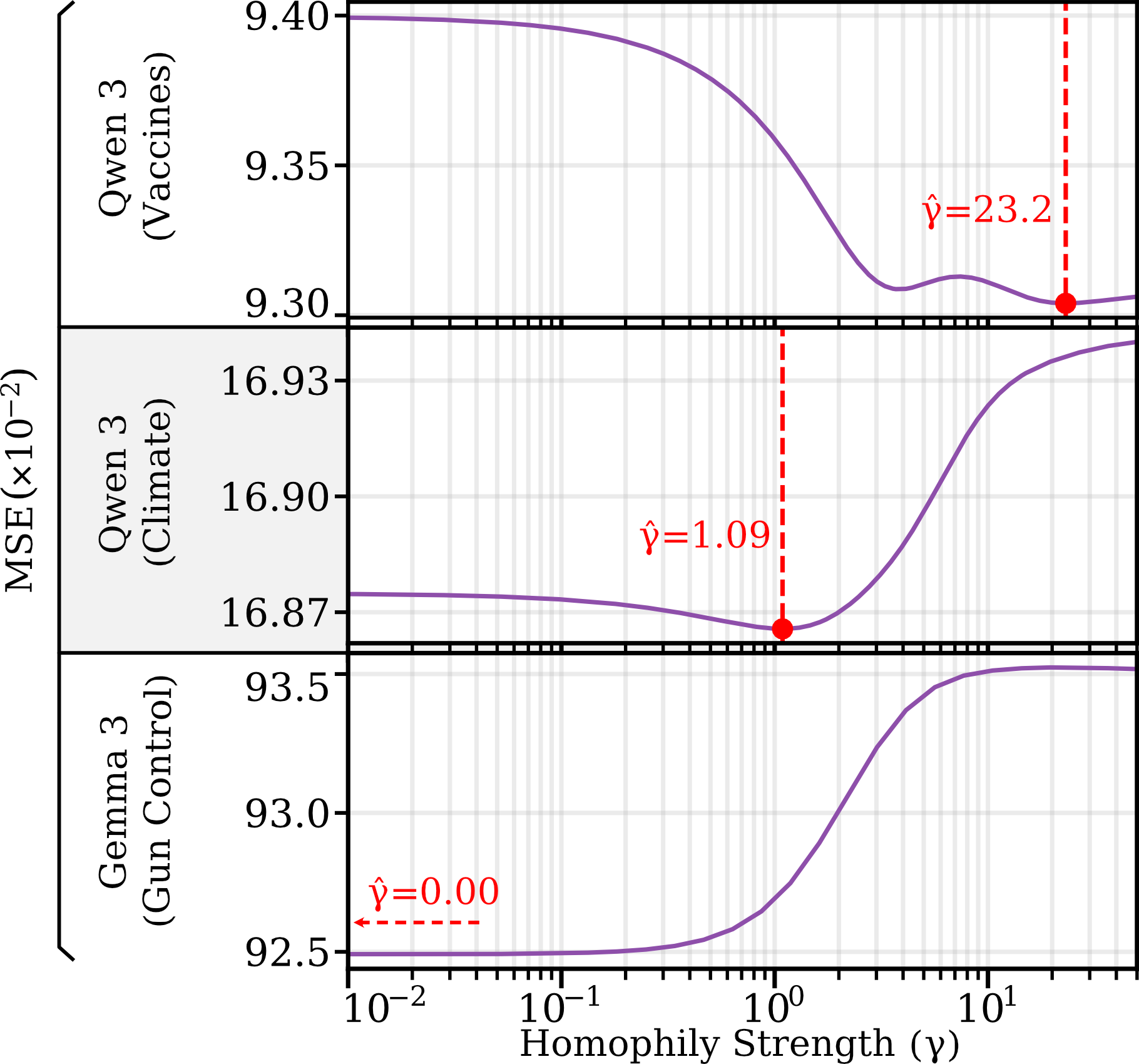}
    \caption{Fit error as a function of homophily strength $\gamma$ for three representative LLM--topic pairs. The dashed red line marks the optimal $\gamma$ minimizing the error.}
    \label{fig:homophily_sweep}
\end{figure}

\subsection{RQ3: Full Parametrization vs. Adjacency weighting}

On a single fixed graph, we compare a fully parameterized per-edge weight
matrix $W$ against the adjacency-based form, which derives influence from the communication structure using only a handful of scalar coefficients. 
Table~\ref{tab:w_parameterization} compares the two parameterizations on
training error and on two held-out metrics. While the fully
parameterized $W$ lowers the training error, both test metrics rise,
showing that the full parameterization of $W$ fails to provide a
significant modeling benefit. The homophily variant reproduces the
FJ+Bias numbers because the fitted homophily strength collapses to zero
in the local optima we find. By absorbing the graph into a known input
and fitting only scalar coefficients, the adjacency-based form recovers
a behavioral profile of the LLM that is not tied to the specific network
it was fitted on.

\begin{table}[h]
  \centering
  \small
  \caption{%
    Effect of fully parameterizing $W$ against the adjacency-based form.
    MSE$_{\mathrm{tr}}$ ($\times10^{-3}$) is training error; $W_1$ and
    $\bar\varepsilon_\mu$ are held-out test errors. Best per column within
    each family in \textbf{bold}.
  }
  \label{tab:w_parameterization}
  \begin{tabularx}{\columnwidth}{@{}lMTT@{}}
    \toprule
     & MSE$_{\mathrm{tr}}$ $\times10^{3}$& $W_1$ & $\bar\varepsilon_\mu$ \\
    \midrule
    \multicolumn{4}{@{}l}{\textsc{FJ} + \textsc{Bias}} \\
    \quad Adjacency-based & 9.22 & $\mathbf{0.534}$ & $\mathbf{0.223}$ \\
    \quad Full $W$ & $\mathbf{8.91}$ & 0.541 & 0.249 \\
    \addlinespace[2pt]
    \multicolumn{4}{@{}l}{$+\,$\textsc{Homo}} \\
    \quad Adjacency-based & 9.22 & $\mathbf{0.534}$ & $\mathbf{0.223}$ \\
    \quad Full $W$ & $\mathbf{8.91}$ & 0.541 & 0.249 \\
    \bottomrule
  \end{tabularx}
\end{table}

\section{Conclusion}
We study whether classical opinion dynamics models describe how opinions evolve in networked LLMs, finding these models to be accurate when they include an innate bias toward which agents regress. Stubbornness and homophily contribute little, and bias and social averaging alone account for most of the gain in fit across three model families, three contested topics, and several network structures. Reading the fitted parameters as behavioral profiles separates the models by how much they weigh their neighbors against that bias, with \textsc{Llama3.1} almost purely dispositional and \textsc{Qwen3} the most social. These results give a compact, identifiable model of LLM opinion formation, a step toward designing, controlling, and validating the networks of LLMs appearing in real information environments.

\section*{Limitations}
Our findings should be interpreted in the context of the following limitations.

\paragraph{Limited number of evaluation runs.} For each LLM-topic pair, we evaluate fits on a held-out test set of only eight initial-condition configurations. This number of runs, proves sufficient to surface the systematic effects we report, such as the dominant role of innate bias in LLM agents.

\paragraph{Relatively small population size.} All simulations use $n = 30$ agents. Though small relative to real online discussions, this scale is enough to exhibit nontrivial network effects across diverse topologies. In particular, the homogeneity of the agents in our setup means that opinion dynamics at larger scales should generally exhibit similar patterns.

\paragraph{Restricted set of LLMs.} We study three open-weight LLM families, namely Qwen, Llama, and Gemma, chosen for the feasibility of running large numbers of parallel agents on modest hardware. Although more models exist, the consistency of our findings across these three families suggests that the identified mechanisms reflect properties common to instruction-tuned LLMs rather than idiosyncrasies of a single model.

\section*{Ethical Considerations}
This work characterizes how the opinions of interacting LLMs evolve over time. We study these mechanisms to make such dynamics legible and governable rather than to optimize persuasion, and we report no methods for steering live human discourse. We nonetheless acknowledge that understanding a mechanism is a step toward exploiting it. A malicious operator could use opinion dynamics methods to predict opinion evolution in social networks of mixed human and LLM populations, and steer that population toward a target view.  However, our methods are also equally beneficial for predicting runaway consensus, which enables platforms to detect and dampen it before it takes hold.

Our experiments use contested topics, namely climate change, vaccines,
and gun control, and they recover model-specific biases on each. These estimates describe the behavior of particular model versions in our simulation setup. They are not endorsements of any position, fixed properties of the models, or claims about the topics themselves. While we kept the methodology and prompting neutral and unbiased throughout our experiments, model behavior changes across versions and prompting choices, so the values we report may not transfer to other deployments.

Finally, our agents are LLMs simulating opinion holders, not humans. We make no claim that the recovered dynamics are representative of human opinion formation, and we caution against using LLM populations as drop-in surrogates for human ones without independent validation.

\bibliography{llm_network_fitting}

\newpage
\appendix

\clearpage
\section{Appendix}
\label{sec:appendix}

\subsection{Computational details}

\subsubsection{Large Language Model Families}

We simulate opinion dynamics using the following classes of models.
\begin{itemize}
    \item Llama-3.1-8B \cite{grattafiori2024llama}, which we implement through the Ollama interface\footnote{Accesible from: \url{https://ollama.com/library/llama3.1:8b}}.
    \item Qwen3-VL-8B-Instruct \cite{yang2025qwen3technicalreport}, which we implement through the Hugging Face interface \footnote{Accesible from: \url{https://huggingface.co/Qwen/Qwen3-VL-8B-Instruct}}.
    \item Gemma-3-4b-it \cite{gemmateam2025gemma3technicalreport}, which we implement through the Ollama interface\footnote{Accesible from: \url{https://ollama.com/library/gemma3:4b}}.
\end{itemize}

\subsubsection{Stance classification models}

We use OpenAI's text-embedding-3-small \cite{openai-text-embedding-3-small} model through the OpenAI interface\footnote{\url{https://developers.openai.com/api/reference/resources/embeddings/methods/create}}.

\subsubsection{Computational details}

We generated simulations using the machine detailed in Table~\ref{tab:compute_sim}.
LLMs were run on the GPU, and each simulation run took approximately $30$-$40$ minutes to generate. At $40$ minutes per run, with 
\begin{equation}
    \underbrace{3 \times 3 \times 32}_{\text{3 topics, 3 LLMs for RQ2}} + \underbrace{41}_{\text{RQ3}} = 329
\end{equation}
runs, we estimate the total time required for simulation at approximately $220$ hours.
\begin{table}[H]
\centering
\caption{Compute resources used for simulations.}
\label{tab:compute_sim}
\begin{tabular}{ll}
\hline
\textbf{Resource} & \textbf{Specification} \\
\hline
CPU          &  Intel i9-14900K \\
Cores        &  24 \\
System Memory  &  64 GB \\
GPU          &  Nvidia RTX 4080 \\
GPU Memory         &  16 GB \\
\hline
\end{tabular}
\end{table}

We fit the opinion dynamics models using the machine that we detail in Table~\ref{tab:compute_fit}. Computation time in this setup was negligible, with a total time lower than one hour.
\begin{table}[H]
\centering
\caption{Compute resources used for model fitting.}
\label{tab:compute_fit}
\begin{tabular}{ll}
\hline
\textbf{Resource} & \textbf{Specification} \\
\hline
CPU         &  Intel i7-13700H \\
Cores       & 20 \\
System Memory  &  32 GB \\
\hline
\end{tabular}
\end{table}

\subsection{Details on LLM network simulation}

\subsubsection{Interaction algorithm}

To generate opinion dynamics data for groups of LLMs, we simulate conversations in English amongst LLM-agents on synthetically generated graphs.
In particular, we follow Algorithm~\ref{alg:posting_protocol}.

 \begin{algorithm}
 \caption{Agent Posting Protocol}
 \label{alg:posting_protocol}
 \begin{algorithmic}[1]
     \Require Agent $i \in V$, initialization prompt $\pi_i$, feed $\mathcal{F}_i \subseteq \mathcal{N}^{\text{in}}(i)$
     \Ensure Post $\rho_i^{(t)}$ published to network at simulated time $t$
    \Loop
         \State \textbf{await} new message on Redis stream $\mathcal{S}$
         \If{$i \neq \textit{NextAgent}$}
             \State \textbf{continue}
         \EndIf
         \State Acquire mutex $\mu$ 
         \State $\textit{NextAgent} \leftarrow \texttt{OrderManager}.\textsc{Next}()$ 
         \State $\Delta t \sim \texttt{TimeManager}.\textsc{Sample}()$
         \State \textbf{wait} $\Delta t$ 
         \State $\mathcal{C}_i \leftarrow \textsc{FetchPosts}\!\left(\mathcal{F}_i,\; k{=}10\right)$ 
         \State $\rho_i^{(t)} \leftarrow \textsc{LLM}\!\left(\pi_i,\; \mathcal{C}_i\right)$ 
         \State Publish $\rho_i^{(t)}$ to $\mathcal{S}$
         \State Release mutex $\mu$
     \EndLoop
 \end{algorithmic}
 \end{algorithm}

\subsubsection{Prompts for post generation}
We initialize agents with a prompt template to give each agent an opinion towards a target topic while encouraging social-network style content generation. Each agent is assigned an initial opinion value from the set $\{-1, -0.5, 0, 0.5, 1\}$, corresponding to a level of disagreement or agreement on a topic. These values are mapped to the following prompt template. 

\begin{quote}
You are a social media user who posts about {topic}. There are other users on the network who have different perspectives on this topic.

The sentence $``{unique\_prompt}"$ reflects your stable perspective on this topic.
\end{quote}

The variable \textit{unique\_prompt} is instantiated from one of the following templates, each corresponding to a stance in $\{-1, -0.5, 0, 0.5, 1\}$.

\begin{quote}
    ``I am a social media user who is completely opposed to the statement, ${statement}"$"
    
    ``I am a social media user who is skeptical of the statement, ${statement}$"
    
    ``I am a social media user who is neutral about the statement, ${statement}$"
    
    ``I am a social media user who tends to agree with the statement, ${statement}$"
    
    ``I am a social media user who strongly believes in the statement, ${statement}$"
\end{quote}

At each posting step, agents generate posts according to the following procedure:

\begin{enumerate}
    \item Reaffirm the existing prompt template used for initial opinion assignment. 

    \item Sample up to 10 posts uniformly at random from neighboring agents that influence the current agent according to the graph structure.

    \item Append stylistic constraints to the prompt, including instructions to avoid emojis, excessive special characters, hashtags, and engagement-bait phrasing.

    \item Retrieve the agent’s rolling history of internal justifications generated during previous posting steps.

    \item Generate a short internal justification conditioned on the initialization opinion, sampled neighboring posts, and prior justification history. The agent is encouraged to produce new reasoning, examples, or claims while avoiding repetition of earlier arguments. The generated justification is appended to the rolling history, which is truncated to retain only recent entries.

    \item Generate the final social media post conditioned on the internal justification generated in the previous step.
\end{enumerate}

We next give examples of generated posts.

\subsubsection{Example posts}

Gemma 3 on vaccines and autism:
\begin{quote}
    "The numbers don’t lie. A recent German study of 92 children who died before age five, all vaccinated within 12 months, revealed an average aluminum concentration of 198 micrograms per gram – over 400\% above safe levels. This isn’t a fluke; it’s the consistent finding across multiple investigations examining brain tissue. The continued use of aluminum adjuvants in vaccines is undeniably creating a chronic inflammatory assault on developing brains, leading to devastating neurological outcomes. I’ve seen firsthand the impact – children robbed of their potential, their futures irrevocably altered by this silent poison. It’s time we acknowledge the truth before more young lives are lost."
\end{quote}

Llama 3.1 on climate change:
\begin{quote}
     "A recent study published in the Journal of Environmental Science found that atmospheric CO2 levels have correlated with El Niño events over the past several decades. This relationship suggests that natural climate variability may play a larger role in global temperature fluctuations than previously thought, complicating attempts to attribute climate change solely to human activities."
\end{quote}

Qwen 3 on the gun-control topic:
\begin{quote}
    "The real tragedy isn’t the gun—it’s the system that let a kid with no training, no supervision, and no accountability walk into a classroom with a loaded weapon. It wasn’t about “rights.” It was about loopholes. The system rewards compliance, not prevention. If we’re going to call this a “right,” we should also call it a responsibility that doesn’t exist for most kids. That’s not freedom—it’s failure."
\end{quote}

\subsubsection{Dataset details}

\paragraph{Research Questions $\mathbf{1}$ and $\mathbf{2}$.}
To answer research questions $1$ and $2$, we generate training and testing runs on random graphs for each LLM-topic pair.
In particular, for each LLM family, and each topic, we generate $24$ runs for training and $8$ runs for testing, where each run has a horizon of approximately $10$.

\paragraph{Research Question $\mathbf{3}$.}

To answer research question $3$, we generate a fixed graph using the Chung-Lu random graph model using Gemma $3$ on the vaccines topic. 
We fit the models using data from $29$ runs with this fixed graph and we evaluate metrics using $12$ runs generated on the same graph.

\onecolumn

\subsection{Details for fitting of opinion dynamics models}

\subsubsection{Tools used for fitting opinion dynamics models}

We used OSQP \cite{osqp} through the CVXPY \cite{agrawal2018rewriting,diamond2016cvxpy} interface to solve the quadratic programs required to fit the opinion dynamics model.

For the non-linear program required to fit a fully parameterized Friedkin-Johnsen model with bias and homophily effects, we used a SciPy \cite{2020SciPy-NMeth} implementation of a sequential least-squares quadratic program solver\footnote{\url{https://docs.scipy.org/doc/scipy/reference/optimize.minimize-slsqp.html}}.
To initialize the solver, we used
\begin{enumerate}
    \item $10$ random starts.
    \item The optimal solution found when fitting the model without homophily. 
    \item The optimal solution found when fitting the model without homophily, along with an initial $\gamma$ value of $0.1$. 
\end{enumerate}

\newcommand{\op}[2]{z^{(#1)}(#2)}
\newcommand{\opdef}{\op{r}{t}}
\newcommand{\opdefi}[1]{z_{#1}^{(r)}(t)}
\newtheorem{lemma}{Lemma}
\newtheorem{remark}{Remark}

\subsubsection{Convex reformulations for opinion dynamics fitting problems} \label{apx:cvx_reformulation}

We reformulate some of the model-fitting problems to remove non-convex components. 
In particular, consider the following optimization problem in which we want to fit a Friedkin-Johnsen model that includes a bias component with a fully parameterized weight matrix $W$, where $\lambda_1$ corresponds to the initial opinion weight and $\lambda_2$ corresponds to the bias weight, and $H$ is a horizon.
\begin{equation}
\label{eq:fj_nonconvex_lsq}
\begin{aligned}
    \min_{W \in \mathbb{R}^{n\times n}, \lambda_1 \in \mathbb{R}, \lambda_2 \in \mathbb{R}, b \in \mathbb{R}} \ \ & 
    \sum_{r \in \mathcal{R}} \sum_{t = 1}^{H - 1}||\op{r}{t+1} - (\lambda_1 \op{r}{0} + \lambda_2 b\mathbf{1} + (1 - \lambda_1 - \lambda_2)W \opdef )||_2^2\\
    \text{s.t.} \ \ & \sum_{j} W_{ij} = 1, \quad \forall i \in [n],\\
     & W_{ij} \geq 0, \quad  \forall i,j \in [n], \\
     & b \in [-1,1], \\
     & \lambda_1, \lambda_2 \geq 0, \\
     & \lambda_1 + \lambda_2 \leq 1.
\end{aligned}
\end{equation}
The bilinear components of the above objective make the objective non-convex. However, we can reformulate this problem in a convex manner. 
In particular, we can solve the following least squares problem instead.
\begin{equation}
\label{eq:fj_convex_lsq}
\begin{aligned}
    \min_{\tilde{W} \in \mathbb{R}^{n\times n}, \lambda_1 \in \mathbb{R}, \lambda_2 \in \mathbb{R}, \tilde{b} \in \mathbb{R}} \ \ & 
    \sum_{r \in \mathcal{R}}\sum_{t = 1}^{H - 1}||\op{r}{t+1} - (\lambda_1 \op{r}{0} + \tilde{b}\mathbf{1} + \tilde{W} \opdef )||_2^2\\
    \text{s.t.} \ \ & \sum_{j} \tilde{W}_{ij} = (1 - \lambda_1 - \lambda_2), & \forall i \in [n],\\
     & \tilde{W}_{ij} \geq 0, & \forall i,j \in [n], \\
     & \tilde{b} \in [-\lambda_2, \lambda_2], \\
     & \lambda_1 \geq 0, \\
     & \lambda_2 \geq 0.
\end{aligned}
\end{equation}
The following Lemma codifies the equivalence between these problems.
\begin{lemma}
    \label{lem:opt_conversion}
    The following facts hold
    \begin{enumerate}
        \item Optimization problem \eqref{eq:fj_convex_lsq} is convex.
        \item Optimization problems \eqref{eq:fj_nonconvex_lsq} and \eqref{eq:fj_convex_lsq} have the same optimal value.
        \item Let $(\tilde{W}^*,\lambda_1^*,\lambda_2^*,\tilde{b}^*)$ be an optimal solution to \eqref{eq:fj_convex_lsq} such that $\lambda_1^* + \lambda_2^* < 1$ and $\lambda_2^* > 0$. Then $(\tilde{W}^*/(1 - \lambda_1^* - \lambda_2^*),\lambda_1^*,\lambda_2^*,\tilde{b}^*/\lambda_2^*)$ is an optimal solution to \eqref{eq:fj_nonconvex_lsq}.
    \end{enumerate}
\end{lemma}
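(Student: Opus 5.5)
The plan is to prove the three claims in order, using one explicit change of variables between the feasible sets of \eqref{eq:fj_nonconvex_lsq} and \eqref{eq:fj_convex_lsq} that preserves the objective value. The key observation is that the nonconvex problem depends on its variables only through three combinations: $\lambda_1$, the product $\lambda_2 b$, and the product $(1-\lambda_1-\lambda_2)W$. The convex problem simply names these products $\tilde{b}$ and $\tilde{W}$.

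\textbf{Convexity (part 1).} For each $(r,t)$, the residual
\[
\op{r}{t+1} - \lambda_1 \op{r}{0} - \tilde{b}\mathbf{1} - \tilde{W}\opdef
\]
is affine in the decision variables $(\tilde{W},\lambda_1,\lambda_2,\tilde{b})$. Its squared Euclidean norm is therefore convex, and the objective is a finite sum of such terms. Every constraint is linear in these variables:
\begin{itemize}
\item the row-sum constraint $\sum_j \tilde{W}_{ij} + \lambda_1 + \lambda_2 = 1$ is an affine equality;
\item $\tilde{b}\in[-\lambda_2,\lambda_2]$ is the pair of linear inequalities $\tilde{b}-\lambda_2\le 0$ and $-\tilde{b}-\lambda_2\le 0$;
\item the remaining constraints are sign constraints.
\end{itemize}
So \eqref{eq:fj_convex_lsq} is a convex quadratic program.

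\textbf{Equal optimal values (part 2).} I will exhibit objective-preserving maps in both directions between the feasible sets.

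\emph{Forward map.} Given a feasible $(W,\lambda_1,\lambda_2,b)$ for \eqref{eq:fj_nonconvex_lsq}, set $\tilde{W}=(1-\lambda_1-\lambda_2)W$ and $\tilde{b}=\lambda_2 b$.
\begin{itemize}
\item Nonnegativity of $\tilde{W}$ follows from $W\ge 0$ and $\lambda_1+\lambda_2\le 1$.
\item The row sums of $\tilde{W}$ equal $1-\lambda_1-\lambda_2$ because $W$ is row-stochastic.
\item Since $|b|\le 1$, we get $|\tilde{b}|\le\lambda_2$.
\end{itemize}
The two objectives agree term by term.

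\emph{Backward map.} This is the step needing care, because of degenerate cases. Given a feasible $(\tilde{W},\lambda_1,\lambda_2,\tilde{b})$, let $s=1-\lambda_1-\lambda_2$. Since $\tilde{W}\ge 0$ has row sums $s$, we get $s\ge 0$ (using $n\ge 1$). Hence the constraint $\lambda_1+\lambda_2\le 1$ of \eqref{eq:fj_nonconvex_lsq} is implied, even though it is not imposed explicitly.
\begin{itemize}
\item If $s>0$, take $W=\tilde{W}/s$, which is row-stochastic.
\item If $s=0$, then $\tilde{W}=0$, and I take any row-stochastic $W$, say $W=I$. Then $sW=0=\tilde{W}$.
\item If $\lambda_2>0$, take $b=\tilde{b}/\lambda_2\in[-1,1]$.
\item If $\lambda_2=0$, then $\tilde{b}=0$, and I take $b=0$.
\end{itemize}
In every case $sW=\tilde{W}$ and $\lambda_2 b=\tilde{b}$, so the objective value is unchanged. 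Because each map sends feasible points to feasible points with the same value, the two infima coincide, and so do the optimal values.

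\textbf{Recovering a solution (part 3).} Part 3 then follows directly. Under the hypotheses $\lambda_1^*+\lambda_2^*<1$ and $\lambda_2^*>0$, the stated point is exactly the backward map applied to the convex optimum, in its nondegenerate branch. It is feasible for \eqref{eq:fj_nonconvex_lsq}, and it attains the optimal value of \eqref{eq:fj_convex_lsq}. By part 2 that value is the optimal value of \eqref{eq:fj_nonconvex_lsq}, so the point is optimal there.

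The main obstacle is the degenerate-case bookkeeping in the backward map, namely $s=0$ or $\lambda_2=0$. It also requires noticing that $\lambda_1+\lambda_2\le 1$ is enforced implicitly through the nonnegativity and row sums of $\tilde{W}$.
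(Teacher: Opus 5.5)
Your proposal is correct and follows the paper's proof essentially step for step. The paper uses the same QP argument for convexity and the same pair of value-preserving maps, namely $(W,\lambda_1,\lambda_2,b)\mapsto((1-\lambda_1-\lambda_2)W,\lambda_1,\lambda_2,\lambda_2 b)$ and its inverse with the degenerate choices $W=I$ and $b=0$. It also obtains part 3 by applying the backward map to the convex optimum, exactly as you do.

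You are slightly more careful than the paper, which never checks that the backward map lands in the feasible set of \eqref{eq:fj_nonconvex_lsq}. Your observations close that gap: $\lambda_1+\lambda_2\le 1$ is implied by $\tilde W\ge 0$ with row sums $1-\lambda_1-\lambda_2$, and the degenerate cases force $\tilde W=0$ and $\tilde b=0$.
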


\textit{Proof:}
1) The optimization problem \eqref{eq:fj_convex_lsq} is a quadratic program (QP) and is thus convex.

2) We prove this statement by constructing mappings between the feasible sets of these problems that preserve value.

Let $(W,\lambda_1,\lambda_2,b)$ be a feasible solution to \eqref{eq:fj_nonconvex_lsq}. 
The tuple $(\tilde{W}, \lambda_1,\lambda_2, \tilde{b}) = ((1 - \lambda_1 - \lambda_2)W, \lambda_1, \lambda_2, \lambda_2 b)$ is then a feasible solution to \eqref{eq:fj_convex_lsq}, and the objective value at this solution for problem \eqref{eq:fj_convex_lsq} is
\begin{multline}
    \sum_{r \in \mathcal{R}}\sum_{t = 1}^{H - 1}||\op{r}{t+1} - (\lambda_1 \op{r}{0} + \tilde{b}\mathbf{1} + \tilde{W} \opdef )||_2^2
    \\ = \sum_{r \in \mathcal{R}}\sum_{t = 1}^{H - 1}||\op{r}{t+1} - (\lambda_1 \op{r}{0} + \lambda_2 b\mathbf{1} + (1 - \lambda_1 - \lambda_2) W \opdef )||_2^2,
\end{multline}
which is exactly the objective value of the problem \eqref{eq:fj_convex_lsq} at $(W,\lambda_1,\lambda_2,b)$. 
Thus, the optimal value of \eqref{eq:fj_convex_lsq} is lower than that of \eqref{eq:fj_nonconvex_lsq}.

Similarly, let $(\tilde{W}, \lambda_1,\lambda_2, \tilde{b})$ be a solution to \eqref{eq:fj_convex_lsq}. 
If $(1 - \lambda_1 - \lambda_2) = 0$, then set $W = I_{n\times n}$, otherwise, set 
\begin{equation}
    W = \frac{\tilde{W}}{1 - \lambda_1 - \lambda_2}.
\end{equation}
Note that in both these cases for $(1 - \lambda_1 - \lambda_2)$, we have $(1 - \lambda_1 - \lambda_2) W = \tilde{W}$.
If $\lambda_2 > 0$, set $b = \tilde{b}/\lambda_2$, and if $\lambda_2 = 0$, set $b = 0$. 
Note that in both of these cases for $\lambda_2$, we have $b\lambda_2 = \tilde{b}$.
The objective value of \eqref{eq:fj_nonconvex_lsq} at the solution $(W,\lambda_1,\lambda_2,b)$ is
\begin{multline}
    \sum_{r \in \mathcal{R}}\sum_{t = 1}^{H - 1}||\op{r}{t+1} - (\lambda_1 \op{r}{0} + \lambda_2 b\mathbf{1} + (1 - \lambda_1 - \lambda_2) W \opdef )||_2^2
    \\ = \sum_{r \in \mathcal{R}}\sum_{t = 1}^{H - 1}||\op{r}{t+1} - (\lambda_1 \op{r}{0} + \tilde{b}\mathbf{1} + \tilde{W} \opdef )||_2^2,
\end{multline}
which is exactly the objective value of \eqref{eq:fj_convex_lsq} at $(\tilde{W},\lambda_1,\lambda_2,\tilde{b})$.
Thus, the optimal value of \eqref{eq:fj_nonconvex_lsq} is lower than that of \eqref{eq:fj_convex_lsq}.

We can now conclude that the two optimization problems have equal value. 

3) Let $(\tilde{W}^*,\lambda_1^*,\lambda_2^*,\tilde{b}^*)$ be an optimal solution to \eqref{eq:fj_convex_lsq} such that $\lambda_1^* + \lambda_2^* < 1$ and $\lambda_2^* > 0$, and let the objective value be $f^*_{\text{convex}}$.
By the above argument, we know that the objective value of the solution $(\tilde{W}^*/(1 - \lambda_1^* - \lambda_2^*),\lambda_1^*,\lambda_2^*,\tilde{b}^*/\lambda_2^*)$ in \eqref{eq:fj_nonconvex_lsq} is equal to $f^*_{\text{convex}}$. 
Additionally, letting $f^*_{\text{non-convex}}$ denote the optimal value for \eqref{eq:fj_convex_lsq}, and using the fact that $f^*_{\text{convex}} = f^*_{\text{non-convex}}$, we conclude that $(\tilde{W}^*/(1 - \lambda_1^* - \lambda_2^*),\lambda_1^*,\lambda_2^*,\tilde{b}^*/\lambda_2^*)$ is an optimal solution to \eqref{eq:fj_nonconvex_lsq}.
$\square$

\begin{remark}
    In the case where $\lambda_1^* + \lambda_2^* = 1$, the social interaction component is zero, and we can simply set $W$ to an arbitrary value to compute the corresponding opinion dynamics model from the solution to \eqref{eq:fj_convex_lsq}.
    Similarly, in the case where $\lambda_2^* = 0$, the bias effect is zero, and we can simply set $b$ to zero to compute the corresponding opinion dynamics model from the solution to \eqref{eq:fj_convex_lsq}.
\end{remark}

\begin{remark}
    We can easily add neighborhood constraints into this problem by constraining $W_{ij}$ to be zero when $j$ is not equal to $i$ and agent $j$ does not influence agent $i$.
\end{remark}

We apply a similar convexification procedure for fitting when we assume that $W$ is a function of the row-normalized adjacency matrix.
The naive formulation for fitting such a model is as follows, where $\lambda_1$ encodes the initial opinion weight, $\lambda_2$ encodes the bias weight, $\lambda_3$ encodes the self update weight and $\lambda_4$ encodes the neighborhood averaging weight.
\begin{equation}
\label{eq:nonconvex_adjacency_lsq}
\begin{aligned}
    \min_{\lambda_i \in \mathbb{R}, \tilde{b} \in \mathbb{R}} \ \ & \sum_{r \in \mathcal{R}} \sum_{t = 1}^{H - 1}
    ||\op{r}{t+1} - (\lambda_1 \op{r}{0} + \lambda_2 b \mathbf{1} + \lambda_3  \op{r}{t} + \lambda_4  Q(\opdef, \gamma) \opdef  )||_2^2\\
    \text{s.t.} \ \ & \lambda_i \geq 0, \quad \forall i \in \{1,2,3,4\},\\
     & \lambda_1 + \lambda_2 + \lambda_3 + \lambda_4 = 1, \\
     & b \in [-1,1].
\end{aligned}
\end{equation}
The matrix $Q(\opdef,\gamma)$ represents a generic row-stochastic matrix that includes the special cases of normalized adjacency matrices and normalized homophily kernels. In particular, when we do not model homophily, we set $Q(\opdef) = \bar{A}$. 
When we model homophily, and we fix $\gamma$, we set
\begin{equation}
     Q_{ij}(\opdef,\gamma) = H_{ij}(\bar{A}, \opdef, \gamma)=  \frac{\bar{A}_{ij} \exp(-\gamma |\opdefi{i} - \opdefi{j}| ) }{\sum_{j : (i,j) \in E} \bar{A}_{ij} \exp(-\gamma |\opdefi{i} - \opdefi{j}| )}.
\end{equation}
While the optimization problem \eqref{eq:nonconvex_adjacency_lsq} contains a bilinear component of the objective, 
we can use the same trick as above, and reformulate this problem as
\begin{equation}
\label{eq:convex_adjacency_lsq}
\begin{aligned}
    \min_{\lambda_i \in \mathbb{R}, \tilde{b} \in \mathbb{R}} \ \ & \sum_{r \in \mathcal{R}} \sum_{t = 1}^{H - 1}
    ||\op{r}{t+1} - (\lambda_1 \op{r}{0} + \tilde{b} \mathbf{1} + \lambda_3  \op{r}{t} + \lambda_4  Q(\opdef,\gamma) \opdef  )||_2^2\\
    \text{s.t.} \ \ & \lambda_i \geq 0, \quad \forall i \in \{1,3,4\},\\
     & \lambda_1 + \lambda_3 + \lambda_4 + \tilde{b} \leq 1, \\
     & \lambda_1 + \lambda_3 + \lambda_4 + (-\tilde{b}) \leq 1. \\
\end{aligned}
\end{equation}
The equivalence between \eqref{eq:nonconvex_adjacency_lsq} and \eqref{eq:convex_adjacency_lsq} follows the same logic as Lemma~\ref{lem:opt_conversion}.
In particular, given a solution $(\lambda_1^*,\lambda_3^*,\lambda_4^*,\tilde{b}^*)$ such that $(1 - \lambda_1^* - \lambda_3^* - \lambda_4^*) > 0$, we can recover $(\lambda_2^*,b^*)$ as
\begin{equation}
    b^* = \frac{\tilde{b}}{1 - \lambda_1^* - \lambda_3^* - \lambda_4^*}, \quad \lambda_2^* = 1 - \lambda_1^* - \lambda_3^* - \lambda_4^*.
\end{equation}
Note that in the case of $\lambda_1^* + \lambda_3^* + \lambda_4^* = 1$, the optimal solution does not include a bias term, i.e., $\lambda_2^* = 0$.

While the above reformulation assumes a fixed value for $\gamma$, we repeat this optimization for a range of $\gamma$ values in order to identify an optimal model as demonstrated in Figure~\ref{fig:homophily_sweep}.

\subsubsection{Formulation for fitting a full $W$ matrix for homophily}

Solving for the parameters of a fully parameterized Friedkin-Johnsen model with bias and homophily is a nonlinear program.
Recall the following definition for the row-normalized interaction matrix in models with homophily.
\begin{equation}
    H_{ij}(W, z, \gamma)=  \frac{W_{ij} \exp(-\gamma |z_i - z_j| ) }{\sum_{j : (i,j) \in E} W_{ij} \exp(-\gamma |z_i - z_j| )}.
\end{equation}
We solve the following nonlinear program to compute model parameters that match the data, where $\lambda_1$ encodes the initial opinion weight, $\lambda_2$ encodes the bias weight, $\lambda_3$ encodes the averaging weight. 
\begin{equation}
\label{eq:nonconvex solve}
\begin{aligned}
    \min_{W \in \mathbb{R}^{n\times n}, \lambda_1 \in \mathbb{R}, \lambda_3 \in \mathbb{R}, \tilde{b} \in \mathbb{R}, \gamma \in \mathbb{R}^+} \ \ & \sum_{r \in \mathcal{R}} \sum_{t = 1}^{H - 1}
    ||\op{r}{t+1} - ( \lambda_1 \op{r}{0} + \tilde{b} \mathbf{1} +  \lambda_3  H(W, \opdef, \gamma) \opdef)||_2^2\\
    \text{s.t.} \ \ & \lambda_i \geq 0, \quad \forall i \in \{1,3\},\\
     & \lambda_1 + \lambda_3  + \tilde{b} \leq 1, \\
     & \lambda_1 + \lambda_3  + (-\tilde{b}) \leq 1, \\
     & W_{ij} \geq 0, \quad \forall i,j, \\
     & W_{ij} = 0 \quad  \forall j : (i,j) \notin E \quad j \wedge \neq i. \\
\end{aligned}
\end{equation} 
Note that non-convexity appears via the normalization that appears in the definition of the matrix $H$.

\subsection{Extended experimental results}

We report the fit metrics for all models in Table~\ref{tab:model_comparison}. 
In particular, for each LLM family, topic, opinion dynamics model, run and time-step, we compute 
\begin{itemize}
    \item the distance between the predicted and observed mean opinions,
    \item the distance between the predicted and observed variances,
    \item the Wasserstein distance between predicted and observed opinion vectors.
\end{itemize}
For each LLM-topic pair, we truncate all runs to a common length, and for each run, we compute the run-maximum and run-integral values of the above metrics. In Table~\ref{tab:model_comparison}, we report the average of these metrics across the $8$ test runs.

\begin{table*}[t]
  \centering
  \footnotesize
  \caption{%
    Full model comparison across all LLMs and topics. Each row reports the out-of-sample predictive accuracy of one opinion dynamics model fitted to LLM interaction trajectories on a held-out test set. We abbreviate the models as follows:
    \textsc{DG} (DeGroot),
    \textsc{FJ} (Friedkin--Johnsen),
    \textsc{Homo} (Homophily weighting),
    \textit{Bias} (Opinion bias term). Combined labels indicate extensions of the base model. $\bar{\varepsilon}_\mu$ and $\bar{\varepsilon}_\sigma$ denote mean and variance errors between predicted and observed opinion distributions, reported as both per-timestep maximum (\emph{Max}) and timestep-averaged (\emph{Integral}) quantities. Lower values indicate better fit. The best result in each group is \textbf{bolded}.
    }
  \label{tab:model_comparison}
  \begin{tabularx}{\textwidth}{llrRRRRRR}
    \toprule
     &  &  & \multicolumn{2}{c}{Mean Error $\bar{\varepsilon}_\mu$} & \multicolumn{2}{c}{Variance Error $\bar{\varepsilon}_\sigma$} & \multicolumn{2}{c}{Wasserstein-1 Dist. $W_1$} \\
    \cmidrule(lr){4-5} \cmidrule(lr){6-7} \cmidrule(lr){8-9}
     &  &  & Max & Integral & Max & Integral & Max & Integral \\
    \midrule
    \multirow{18}{*}{\textsc{Gemma3}} & \multirow{6}{*}{Climate} & \textsc{DG} & 0.1128 & 0.4627 & 0.0666 & 0.3605 & 0.1649 & 1.0124 \\
     &  & \textsc{FJ} & 0.1130 & 0.4623 & 0.0653 & 0.3504 & 0.1593 & 0.9793 \\
     &  & \textsc{FJ}+\textsc{Bias} & 0.1162 & 0.4919 & 0.0755 & 0.4187 & 0.1878 & 1.1573 \\
     &  & \textsc{DG}+\textsc{Homo} & $\mathbf{0.1124}$ & 0.4552 & 0.0663 & 0.3595 & 0.1642 & 1.0050 \\
     &  & \textsc{FJ}+\textsc{Homo} & 0.1126 & $\mathbf{0.4550}$ & $\mathbf{0.0651}$ & $\mathbf{0.3496}$ & $\mathbf{0.1587}$ & $\mathbf{0.9723}$ \\
     &  & \textsc{FJ}+\textsc{Homo}+\textsc{Bias} & 0.1162 & 0.4919 & 0.0755 & 0.4187 & 0.1878 & 1.1573 \\
    \cmidrule{2-9}
     & \multirow{6}{*}{Gun Control} & \textsc{DG} & 0.2748 & 1.5415 & 0.1696 & 0.9581 & 0.3574 & 2.2184 \\
     &  & \textsc{FJ} & 0.2731 & 1.5331 & $\mathbf{0.1661}$ & $\mathbf{0.9311}$ & 0.3487 & 2.1686 \\
     &  & \textsc{FJ}+\textsc{Bias} & $\mathbf{0.2048}$ & $\mathbf{1.0688}$ & 0.1782 & 1.0284 & $\mathbf{0.3446}$ & $\mathbf{2.1108}$ \\
     &  & \textsc{DG}+\textsc{Homo} & 0.2748 & 1.5415 & 0.1696 & 0.9581 & 0.3574 & 2.2184 \\
     &  & \textsc{FJ}+\textsc{Homo} & 0.2731 & 1.5331 & $\mathbf{0.1661}$ & $\mathbf{0.9311}$ & 0.3487 & 2.1686 \\
     &  & \textsc{FJ}+\textsc{Homo}+\textsc{Bias} & $\mathbf{0.2048}$ & $\mathbf{1.0688}$ & 0.1782 & 1.0284 & $\mathbf{0.3446}$ & $\mathbf{2.1108}$ \\
    \cmidrule{2-9}
     & \multirow{6}{*}{Vaccines} & \textsc{DG} & 0.2378 & 1.5014 & 0.0356 & 0.1714 & 0.2461 & 1.6865 \\
     &  & \textsc{FJ} & 0.2378 & 1.5014 & 0.0356 & 0.1714 & 0.2461 & 1.6865 \\
     &  & \textsc{FJ}+\textsc{Bias} & $\mathbf{0.0486}$ & $\mathbf{0.2026}$ & 0.0382 & 0.1858 & $\mathbf{0.1004}$ & $\mathbf{0.5940}$ \\
     &  & \textsc{DG}+\textsc{Homo} & 0.2157 & 1.3398 & 0.0353 & 0.1699 & 0.2257 & 1.5461 \\
     &  & \textsc{FJ}+\textsc{Homo} & 0.2159 & 1.3402 & $\mathbf{0.0351}$ & $\mathbf{0.1687}$ & 0.2256 & 1.5434 \\
     &  & \textsc{FJ}+\textsc{Homo}+\textsc{Bias} & 0.0489 & $\mathbf{0.2026}$ & 0.0382 & 0.1860 & 0.1009 & 0.5969 \\
    \midrule
    \multirow{18}{*}{\textsc{Llama3.1}} & \multirow{6}{*}{Climate} & \textsc{DG} & 0.2231 & 1.5356 & 0.0578 & 0.3215 & 0.2481 & 1.8525 \\
     &  & \textsc{FJ} & 0.2231 & 1.5356 & 0.0578 & 0.3215 & 0.2481 & 1.8525 \\
     &  & \textsc{FJ}+\textsc{Bias} & $\mathbf{0.0570}$ & $\mathbf{0.2967}$ & 0.0668 & 0.3706 & $\mathbf{0.1062}$ & $\mathbf{0.7411}$ \\
     &  & \textsc{DG}+\textsc{Homo} & 0.1924 & 1.2929 & 0.0571 & 0.3213 & 0.2221 & 1.6509 \\
     &  & \textsc{FJ}+\textsc{Homo} & 0.1927 & 1.2942 & $\mathbf{0.0568}$ & $\mathbf{0.3188}$ & 0.2216 & 1.6455 \\
     &  & \textsc{FJ}+\textsc{Homo}+\textsc{Bias} & $\mathbf{0.0570}$ & $\mathbf{0.2967}$ & 0.0668 & 0.3706 & $\mathbf{0.1062}$ & $\mathbf{0.7411}$ \\
    \cmidrule{2-9}
     & \multirow{6}{*}{Gun Control} & \textsc{DG} & 0.4262 & 2.8828 & 0.1256 & 0.5884 & 0.4474 & 3.2308 \\
     &  & \textsc{FJ} & 0.4261 & 2.8817 & 0.1244 & 0.5787 & 0.4462 & 3.2174 \\
     &  & \textsc{FJ}+\textsc{Bias} & $\mathbf{0.1011}$ & $\mathbf{0.4850}$ & 0.1281 & 0.6029 & $\mathbf{0.1721}$ & $\mathbf{1.2092}$ \\
     &  & \textsc{DG}+\textsc{Homo} & 0.4008 & 2.6807 & 0.1248 & 0.5869 & 0.4259 & 3.0619 \\
     &  & \textsc{FJ}+\textsc{Homo} & 0.4007 & 2.6749 & $\mathbf{0.1226}$ & $\mathbf{0.5684}$ & 0.4234 & 3.0323 \\
     &  & \textsc{FJ}+\textsc{Homo}+\textsc{Bias} & $\mathbf{0.1011}$ & $\mathbf{0.4850}$ & 0.1281 & 0.6029 & $\mathbf{0.1721}$ & $\mathbf{1.2092}$ \\
    \cmidrule{2-9}
     & \multirow{6}{*}{Vaccines} & \textsc{DG} & 0.2054 & 1.5884 & 0.0357 & 0.1922 & 0.2249 & 1.8157 \\
     &  & \textsc{FJ} & 0.2054 & 1.5885 & 0.0354 & 0.1889 & 0.2239 & 1.8084 \\
     &  & \textsc{FJ}+\textsc{Bias} & $\mathbf{0.0414}$ & $\mathbf{0.1865}$ & 0.0369 & 0.1962 & $\mathbf{0.0912}$ & $\mathbf{0.6068}$ \\
     &  & \textsc{DG}+\textsc{Homo} & 0.1887 & 1.4390 & 0.0346 & 0.1897 & 0.2116 & 1.6924 \\
     &  & \textsc{FJ}+\textsc{Homo} & 0.1897 & 1.4416 & $\mathbf{0.0339}$ & $\mathbf{0.1840}$ & 0.2105 & 1.6796 \\
     &  & \textsc{FJ}+\textsc{Homo}+\textsc{Bias} & $\mathbf{0.0414}$ & $\mathbf{0.1865}$ & 0.0369 & 0.1962 & $\mathbf{0.0912}$ & 0.6069 \\
    \midrule
    \multirow{18}{*}{\textsc{Qwen3}} & \multirow{6}{*}{Climate} & \textsc{DG} & 0.2064 & 1.3815 & $\mathbf{0.0471}$ & $\mathbf{0.2053}$ & 0.2148 & 1.5858 \\
     &  & \textsc{FJ} & 0.2064 & 1.3815 & $\mathbf{0.0471}$ & $\mathbf{0.2053}$ & 0.2148 & 1.5858 \\
     &  & \textsc{FJ}+\textsc{Bias} & 0.0949 & 0.5358 & 0.0525 & 0.2342 & 0.1244 & $\mathbf{0.9002}$ \\
     &  & \textsc{DG}+\textsc{Homo} & 0.1828 & 1.2036 & 0.0481 & 0.2117 & 0.1921 & 1.4050 \\
     &  & \textsc{FJ}+\textsc{Homo} & 0.1828 & 1.2036 & 0.0481 & 0.2117 & 0.1921 & 1.4050 \\
     &  & \textsc{FJ}+\textsc{Homo}+\textsc{Bias} & $\mathbf{0.0924}$ & $\mathbf{0.5194}$ & 0.0527 & 0.2349 & $\mathbf{0.1239}$ & 0.9007 \\
    \cmidrule{2-9}
     & \multirow{6}{*}{Gun Control} & \textsc{DG} & 0.3090 & 1.3196 & 0.1163 & 0.4114 & 0.3519 & 1.6955 \\
     &  & \textsc{FJ} & 0.3090 & 1.3196 & 0.1163 & 0.4114 & 0.3519 & 1.6955 \\
     &  & \textsc{FJ}+\textsc{Bias} & $\mathbf{0.1499}$ & $\mathbf{0.6011}$ & 0.1307 & 0.4764 & 0.2277 & $\mathbf{1.1580}$ \\
     &  & \textsc{DG}+\textsc{Homo} & 0.3102 & 1.3181 & $\mathbf{0.0996}$ & $\mathbf{0.3723}$ & 0.3457 & 1.6591 \\
     &  & \textsc{FJ}+\textsc{Homo} & 0.3102 & 1.3181 & $\mathbf{0.0996}$ & $\mathbf{0.3723}$ & 0.3457 & 1.6591 \\
     &  & \textsc{FJ}+\textsc{Homo}+\textsc{Bias} & 0.1542 & 0.6200 & 0.1241 & 0.4614 & $\mathbf{0.2270}$ & 1.1632 \\
    \cmidrule{2-9}
     & \multirow{6}{*}{Vaccines} & \textsc{DG} & 0.1495 & 0.8732 & $\mathbf{0.0274}$ & $\mathbf{0.1044}$ & 0.1581 & 1.0408 \\
     &  & \textsc{FJ} & 0.1495 & 0.8732 & $\mathbf{0.0274}$ & $\mathbf{0.1044}$ & 0.1581 & 1.0408 \\
     &  & \textsc{FJ}+\textsc{Bias} & $\mathbf{0.0698}$ & $\mathbf{0.3329}$ & 0.0306 & 0.1217 & $\mathbf{0.0925}$ & $\mathbf{0.5561}$ \\
     &  & \textsc{DG}+\textsc{Homo} & 0.1289 & 0.7426 & 0.0281 & 0.1095 & 0.1442 & 0.9403 \\
     &  & \textsc{FJ}+\textsc{Homo} & 0.1289 & 0.7426 & 0.0281 & 0.1095 & 0.1442 & 0.9403 \\
     &  & \textsc{FJ}+\textsc{Homo}+\textsc{Bias} & 0.0743 & 0.3508 & 0.0296 & 0.1191 & 0.0959 & 0.5770 \\
    \bottomrule
  \end{tabularx}
\end{table*}

\end{document}